\def\draft{0}
\newtheorem{theorem}{Theorem}[section]
\newtheorem{definition}[theorem]{Definition}
\newtheorem{lemma}[theorem]{Lemma}
\newtheorem{proposition}[theorem]{Proposition}
\newtheorem{remk}[theorem]{Remark}
\def\FullBox{\hbox{\vrule width 8pt height 8pt depth 0pt}}
\def\qed{\ifmmode\qquad\FullBox\else{\unskip\nobreak\hfil
\penalty50\hskip1em\null\nobreak\hfil\FullBox
\parfillskip=0pt\finalhyphendemerits=0\endgraf}\fi}
\newenvironment{proof}{\begin{trivlist} \item {\bf Proof:~~}}
  {\qed\end{trivlist}}
\def\qedsketch{\ifmmode\Box\else{\unskip\nobreak\hfil
\penalty50\hskip1em\null\nobreak\hfil$\Box$
\parfillskip=0pt\finalhyphendemerits=0\endgraf}\fi}
\newcommand{\etal}{{\it et~al.\ }}
\newcommand{\zo}{\{0,1\}}
\newcommand{\eps}{\varepsilon}
\newcommand{\authnote}[2]{{ \bf [#1's Note: #2]}}
\newcommand{\authnote}[2]{}
\newcommand{\COMMENT}[1]{}
\newcommand{\ket}[1]{|#1\rangle}
\newcommand{\ketbra}[2]{|#1\rangle\langle#2|}
\def\01{\{0,1\}}
\newcommand{\braket}[2]{\langle{#1}|{#2}\rangle} 
\def\01{\{0,1\}}
\newcommand{\triple}[3]{\langle{#1}|{#2}|{#3}\rangle}
\newcommand{\Tr}{\mbox{\rm Tr}}
\newcommand{\cadre}[1]
{
\begin{tabular}{|p{13.4cm}|}
\hline
#1 \\
\hline
\end{tabular}
}
\newcommand{\spa}[1]{\mathcal{#1}}
\newcommand{\pict}[1]{#1}
\title{Improved Loss-Tolerant Quantum Coin Flipping}
\author{André Chailloux}
\begin{document}

\maketitle
\begin{abstract}
In this paper, we present a loss-tolerant quantum strong coin flipping protocol with bias $\eps \approx 0.359$. This is an improvement over Berlin \etal's protocol~\cite{BBBG08} which achieves a bias of $0.4$. To achieve this, we extend Berlin \etal's protocol by adding an encryption step that hides some information to Bob until he confirms that he successfully measured. We also show using numerical analysis that a we cannot improve this bias by considering a $k$-fold repetition of Berlin \etal's protocol for $k > 2$.
\end{abstract}

\section{Introduction}

Coin flipping is a cryptographic primitive that enables two distrustful and far apart parties, Alice and Bob, to create a random bit that remains unbiased even if one of the players tries to force a specific outcome. It was first proposed by Blum ~\cite{Blu81} and has since found numerous applications in two-party secure computation.
In the classical world, coin flipping is possible under computational assumptions like the hardness of factoring or the discrete log problem. However, 
in the information theoretic setting, it is not hard to see that in any classical protocol, one of the players can always bias the coin to his or her desired outcome with probability 1.  

Quantum information has given us the opportunity to revisit information theoretic security in cryptography. The first breakthrough result was a protocol of Bennett and Brassard~\cite{BB84} that showed how to securely distribute a secret key between two players in the presence of an omnipotent eavesdropper. Thenceforth, a long series of work has focused on which other cryptographic primitives are possible with the help of quantum information. Unfortunately, the subsequent results were not positive. Mayers and Lo, Chau proved the impossibility of secure quantum bit commitment and oblivious transfer and consequently of any type of two-party secure computation~\cite{May97,LC97,DKSW07}. However, several weaker variants of these primitives have been shown to be possible~\cite{HK04,BCH+08}.

The case of coin flipping is one of the most intriguing ones. Even though the results of Mayers and of Lo and Chau exclude the possibility of perfect quantum coin flipping, it still remained open whether one can construct a quantum protocol where no player could bias the coin with probability 1. A few years later, Aharonov et al.~\cite{ATVY00} provided such a protocol where no dishonest player could bias the coin with probability higher than 0.9143. Then, Ambainis~\cite{Amb01} described an improved protocol whose cheating probability was at most $3/4$. Subsequently, a number of different protocols have been proposed~\cite{SR01,NS03,KN04} that achieved the same bound of $3/4$. Finally, it was shown in~\cite{CK09} how to achieve a strong coin flipping with cheating probability $\frac{1}{\sqrt{2}}$ using a weaker coin flipping primitive developed by Mochon~\cite{Moc07}.

The results mentioned earlier don't take into account practical issues such as losses, noise or other imperfections in the quantum apparatus used. In $2008$, Berlin \etal presented a loss-tolerant quantum coin flipping with bias $0.4$. In this protocol, honest players don't always succeed when they perform a measurement (the measurement sometimes abort) but when they do succeed, they always output the correct value. This is in contrast with noise tolerance where an honest player could perform a measure with a wrong outcome without knowing it. Recently, Aharon $\etal$~\cite{AMS10} created a loss-tolerant quantum coin flipping protocol with bias $\eps \approx 0.3975$. In another flavor, Barrett and Massar~\cite{BM04} showed how to do bit-string generation (a weaker notion of coin flipping) in the presence of noise.

In this paper, we continue the study of loss-tolerant quantum coin flipping protocol. We construct such a protocol with bias $\eps \approx 0.359$. To achieve this bias, we extend Berlin \etal's protocol by adding an encryption step that hides some information to Bob as long as he doesn't confirm that he successfully measured. Notice that we improve the bias of the protocol by adding only a classical layer on top of Berlin~\etal's protocol. Let us emphasize that in this paper we only look at
information theoretic security and we do not discuss computational security or security in
restricted models like the bounded-storage or noisy-storage model~\cite{DFSS08,WST08}.

It would be interesting whether to see whether such techniques can be used to deal with loss-tolerance in other practical models such as the bounded/noisy storage model. Moreover, finding a noise-tolerant quantum coin flipping with information theoretic security and small bias remains an interesting open question.

\section{Our work}

We continue \cite{BBBG08}'s work and try to create practical quantum coin flipping protocols. As their protocol, we ask Alice and Bob to send several copies of single qubit states. Moreover, we don't require honest players to have any quantum memories. On the other hand, we consider cheating players as being all powerful.

As explained in \cite{BBBG08}, one of the main difficulties in creating a bit-commitment based coin flipping lies in the states you send to Bob. The existence of a conclusive measurement between the states sent to Bob allows him to cheat perfectly even if the states are close. Berlin \etal's protocol is of the following form.
\begin{itemize}
\item Alice sends a state $\sigma$ to Bob.
\item Bob measures this state in some basis ${B}$ (possibly dependent on some of his  private coins). If Bob successfully measures then they continue the protocol. Otherwise, they start again
\end{itemize}
In this protocol, the state $\sigma$ is chosen very carefully such that a cheating Bob cannot take advantage of the fact, that he can reset the protocol.
This strongly limits the good choices for $\sigma$. To partially overcome this problem, we use the following high-level scheme
\begin{itemize}
\item Alice picks $r\in_R \zo$ and sends $E_r(\sigma)$ where $E_r$ is some quantum operation that hides some information about $\sigma$
\item Bob measures in some basis $B$. If Bob successfully measures then they continue the protocol. Otherwise, they start again
\item Alice reveals $r$ and then they continue the protocol
\end{itemize}
While doing this, one must be careful that an honest Bob will still be able to exploit the measurement of the encrypted state and that Alice cannot use this to cheat.

Applying this scheme on a two-fold parallel repetition of Berlin {\it{etal}}'s protocol, we show the following
\begin{theorem}
There is a loss-tolerant quantum coin flipping protocol with bias $\eps \approx 0.359$
\end{theorem}

Notice that without this encryption step extra step, the resulting scheme would not be loss-tolerant but the bias would remain the same.

\section{Preliminaries}
\subsection{Definitions}

The statistical distance over classical distributions is defined as
$\Delta(\{X_i\}_{i \in \zo^n},\{Y_i\}_{i \in \zo^n}) = \frac{1}{2}\sum_i |X_i - Y_i|$

Following~\cite{NC00}, we define the fidelity and the trace distance for any quantum states $\rho,\sigma$ as follows
\begin{eqnarray*}
D(\rho,\sigma) & = & \frac{1}{2}|\sigma - \rho| \quad \textrm{with } |A| = \sqrt{A^\dagger A} \\
F(\rho,\sigma) & = & tr(\sqrt{\rho^{1/2}\sigma\rho^{1/2}})
\end{eqnarray*}

Note that the fidelity is sometimes defined as $(tr(\sqrt{\rho^{1/2}\sigma\rho^{1/2}}))^2$.

For two quantum states $\rho,\sigma$ such that,  $\rho = \sum_{i} p_i \ketbra{i}{i}$ and $\sigma = \sum_i q_i \ketbra{i}{i}$ we have $D(\rho,\sigma) = \Delta(\{X_i\},\{Y_i\})$ and $F(\rho,\sigma) = \sum_{i} \sqrt{p_iq_i}$.

\begin{definition}
Let $E$ and $F$ any two ensembles of quantum states and let $\rho$ any quantum state. We define:
\begin{eqnarray*}
F(\rho,E) & = & \max_{\sigma \in E} F(\rho,\sigma) \\
F(E,F) & = & \max_{\sigma \in E, \sigma' \in F} F(\sigma,\sigma')
\end{eqnarray*}
\end{definition}
Finally, we define a (strong) coin flipping protocol
\begin{definition}
A coin flipping protocol with bias $\eps$ consists of instructions given to Alice and Bob and an outcome $x \in \{0,1,\bot\}$ ($\bot$ corresponds to aborting the protocol) such that
\begin{itemize}
\item If Alice and Bob are honest then $\Pr[x = 0] = \Pr[x = 1] = 1/2$
\item For any cheating Alice $P^*_A = \max\{\Pr[x = 0],\Pr[x = 1]\} \le 1/2 + \eps$
\item For any cheating Bob $P^*_B = \max\{\Pr[x = 0],\Pr[x = 1]\} \le 1/2 + \eps$
\end{itemize}
\end{definition}

For completeness, we also define weak coin-flipping protocols

\begin{definition}
A weak coin flipping protocol with bias $\eps$ consists of instructions given to Alice and Bob and an outcome $x \in \{0,1,\bot\}$ ($\bot$ corresponds to aborting the protocol) such that
\begin{itemize}
\item If Alice and Bob are honest then $\Pr[x = 0] = \Pr[x = 1] = 1/2$
\item For any cheating Alice $P^*_A = \Pr[x = 0] \le 1/2 + \eps$
\item For any cheating Bob $P^*_B = \Pr[x = 1] \le 1/2 + \eps$
\end{itemize}
Intuitively, $x=0$ corresponds to the fact that Alice wins and $x=1$ to the fact that Bob wins. Note that a cheating player can win with probability less than $1/2 + \eps$ but can lose with probability $1$.
\end{definition}

\subsection{Useful facts}
\begin{proposition}~\cite{NC00}\label{DistanceBound}
For any two states $\rho_0,\rho_1$ and any pure state $\ket{\phi}$, we have \[
D(\rho_0,\rho_1) \ge |\triple{\phi}{\rho_0}{\phi} - \triple{\phi}{\rho_1}{\phi}|
\]
\end{proposition}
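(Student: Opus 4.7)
The plan is to invoke the operational characterization of the trace distance, which states that for any density operators $\rho_0, \rho_1$,
\[
D(\rho_0,\rho_1) \;=\; \max_{0 \le M \le \Id}\bigl|\Tr(M(\rho_0-\rho_1))\bigr|,
\]
where the maximum is over all Hermitian operators $M$ with $0 \le M \le \Id$ (equivalently, over all two-outcome POVM elements). This is the standard operational interpretation proved in~\cite{NC00} and is the only ingredient we need.

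Given this, the proof consists of a single substitution: take $M = \ketbra{\phi}{\phi}$. Since $\ket{\phi}$ is a unit vector, $\ketbra{\phi}{\phi}$ is a rank-one orthogonal projector, so it automatically satisfies $0 \le M \le \Id$ and is thus an admissible choice in the maximization. For any state $\rho$ we have $\Tr(\ketbra{\phi}{\phi}\rho) = \bra{\phi}\rho\ket{\phi} = \triple{\phi}{\rho}{\phi}$, so plugging $M = \ketbra{\phi}{\phi}$ into the displayed identity above and using monotonicity of the maximum yields
\[
D(\rho_0,\rho_1) \;\ge\; \bigl|\triple{\phi}{\rho_0}{\phi} - \triple{\phi}{\rho_1}{\phi}\bigr|,
\]
which is exactly the claimed inequality.

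There is no real obstacle here: everything follows immediately once the operational characterization is in hand. The only thing one might want to double-check is the sign/normalization conventions in the paper's definition $D(\rho,\sigma) = \tfrac{1}{2}|\sigma-\rho|$ with $|A| = \sqrt{A^\dagger A}$, to make sure the factor of $1/2$ in the definition is consistent with the factor of $1/2$ appearing in the operational formulation (it is, since the maximum of $\Tr(M(\rho_0-\rho_1))$ over POVM elements equals $\tfrac{1}{2}\Tr|\rho_0-\rho_1|$, by taking $M$ to be the projector onto the positive support of $\rho_0-\rho_1$).
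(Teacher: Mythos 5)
Your proof is correct: the paper states Proposition~\ref{DistanceBound} without proof, simply citing~\cite{NC00}, and your argument---instantiating the operational characterization $D(\rho_0,\rho_1) = \max_{0 \le M \le \Id} |\Tr(M(\rho_0-\rho_1))|$ with the admissible rank-one projector $M = \ketbra{\phi}{\phi}$---is exactly the standard proof from that cited reference. Your normalization check is also sound, since $\Tr(\rho_0-\rho_1)=0$ forces the positive part of $\rho_0-\rho_1$ to carry trace $\tfrac{1}{2}\Tr|\rho_0-\rho_1|$, matching the factor $\tfrac{1}{2}$ in the paper's definition.
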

\begin{proposition}~\cite{NC00}\label{ConvOfTraceDistance}
For any two states $\rho,\sigma$ such that $\rho = \sum_i p_i \ketbra{\phi_i}{\phi_i}$ and $\sigma = \sum_i q_i \ketbra{\phi_i}{\phi_i}$, we have
\[
D(\rho,\sigma) \le \Delta(\{X_i\},\{Y_i\}) \]
\end{proposition}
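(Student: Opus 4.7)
The plan is to reduce the statement to the triangle inequality for the trace norm applied to the decomposition of $\rho-\sigma$. Writing $\rho-\sigma = \sum_i (p_i - q_i)\ketbra{\phi_i}{\phi_i}$, the idea is to bound
\[
D(\rho,\sigma) \;=\; \tfrac{1}{2}\,\bigl\|\rho-\sigma\bigr\|_1 \;\le\; \tfrac{1}{2}\sum_i |p_i - q_i|\cdot\bigl\|\ketbra{\phi_i}{\phi_i}\bigr\|_1,
\]
and then observe that each $\ketbra{\phi_i}{\phi_i}$ is a rank-one projector, so its trace norm equals $1$. The right-hand side is then exactly $\tfrac{1}{2}\sum_i |p_i - q_i| = \Delta(\{X_i\},\{Y_i\})$, which is what we want.

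Concretely, the first step is to unfold the definition $D(\rho,\sigma) = \tfrac12\|\rho-\sigma\|_1$ given in the paper. The second step is to invoke the triangle inequality and absolute homogeneity of the trace norm on the finite sum, which is valid because the $\ketbra{\phi_i}{\phi_i}$ are bounded operators on the same Hilbert space; here the coefficients $p_i - q_i$ can be negative, so pulling out $|p_i - q_i|$ is crucial. The third step is the trivial computation $\|\ketbra{\phi_i}{\phi_i}\|_1 = \mathrm{tr}(\ketbra{\phi_i}{\phi_i}) = 1$ using that $\ketbra{\phi_i}{\phi_i}$ is positive and unit-trace.

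There is essentially no hard part: the only thing worth flagging is that the statement does not assume the $\ket{\phi_i}$ are orthogonal, which is exactly why the inequality can be strict (and why the companion Proposition~\ref{DistanceBound}, or its specialization to orthogonal bases, would give equality in that case). Thus the proof is a one-line application of the triangle inequality for $\|\cdot\|_1$ together with the normalization of pure-state projectors, and fits in three or four lines in the final write-up.
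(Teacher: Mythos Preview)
Your argument is correct: the triangle inequality for the trace norm applied to $\rho-\sigma=\sum_i(p_i-q_i)\ketbra{\phi_i}{\phi_i}$, together with $\|\ketbra{\phi_i}{\phi_i}\|_1=1$, immediately gives the bound, and you rightly point out that no orthogonality of the $\ket{\phi_i}$ is needed. Note that the paper does not actually supply a proof of this proposition; it is listed among the ``Useful facts'' with a citation to~\cite{NC00}, so there is nothing to compare against beyond observing that your proof is the standard one found in that reference.
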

\begin{proposition}~\cite{FV99}\label{FidelityInequality1}
For any quantum states $\rho,\sigma$, we have
\[
1 - F(\rho,\sigma) \le D(\rho,\sigma) \le \sqrt{1 - F^2(\rho,\sigma)}
\]
\end{proposition}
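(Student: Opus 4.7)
The plan is to construct the protocol explicitly by instantiating the high-level scheme from Section~2 on a two-fold parallel repetition of Berlin \etal's scheme, and then bound the cheating probabilities of Alice and Bob separately. First I would specify the protocol: take two independent copies of Berlin \etal's single-round commitment, so Alice picks bits $a_1,a_2 \in \zo$ and basis indices $\alpha_1,\alpha_2$, and sends the two qubits $\ket{\varphi_{a_i,\alpha_i}}$ from the parameterized single-qubit family used in~\cite{BBBG08}. On top of this, Alice samples $r\in_R \zo$ and applies a classical quantum operation $E_r$ that hides a designated classical label (for instance, $r$ is used to XOR-mask one of the $a_i$'s before the reveal phase). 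Bob chooses measurement bases, performs the conclusive measurement on both qubits, and announces whether at least one position succeeded; if not, the round restarts. Only after this announcement does Alice reveal $r$ together with $(a_i,\alpha_i)$; Bob verifies consistency, and the outcome is defined as the XOR of the bits of the surviving position(s) with Bob's own random bits, so that two honest players obtain a uniform bit.

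Second, I would analyze cheating Alice. Since she controls $r$, the encryption step gives her no advantage, and the problem essentially reduces to the cheating-Alice analysis of a two-fold repetition of Berlin \etal's protocol. Working with a purification of Alice's joint message, I would use Proposition~\ref{DistanceBound} together with Proposition~\ref{FidelityInequality1} to upper bound her distinguishing advantage between the two outcomes by the fidelity between the optimal states she can use to steer Bob to $0$ versus $1$. This produces a closed-form bound $P^*_A \le \tfrac12 + \eps_A$ depending on the overlap parameter of the single-qubit family.

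Third --- and this is where the main difficulty lies --- I would bound cheating Bob. Because Bob must declare success \emph{before} learning $r$, his view at the restart decision is the mixture $\tfrac12 E_0(\sigma) + \tfrac12 E_1(\sigma)$, and Proposition~\ref{FidelityInequality1} converts any attempt to distinguish the two outcomes before the reveal into a fidelity constraint on pairs of encrypted states. I would parametrize an adaptive cheating Bob by (i) his basis choice, (ii) a POVM that produces an outcome plus an accept/restart flag, and (iii) a guess of the final coin as a function of his outcome and of $r$, once revealed; the loss-tolerance property forces the honest distribution over the surviving position(s) to be simulable by him only up to a fidelity gap that the encryption opens. The hard step is ruling out strategies that exploit information leaked in failed rounds: I would argue that, conditioned on any announced failure pattern, Bob's residual state about $r$ is completely mixed because $E_0$ and $E_1$ are designed to be indistinguishable to the honest measurement he pretends to perform, and hence failed rounds contribute nothing to his subsequent guess. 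This yields a bound $P^*_B \le \tfrac12 + \eps_B$ in terms of the same overlap parameter.

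Finally, I would tune the single-qubit family's parameter to equalize $\eps_A$ and $\eps_B$ and verify numerically that the balanced value is $\eps \approx 0.359$, strictly better than the $0.4$ bias of~\cite{BBBG08}. The bulk of the work, and the conceptual novelty to emphasize, is the cheating-Bob bound: without the encryption step one recovers exactly Berlin \etal's bias $0.4$ on the two-fold repetition, and it is precisely the masking by $r$ that prevents Bob's restart attack from leveraging the parallel structure.
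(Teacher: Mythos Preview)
Your proposal does not address the stated proposition at all. Proposition~\ref{FidelityInequality1} is the Fuchs--van de Graaf inequality $1 - F(\rho,\sigma) \le D(\rho,\sigma) \le \sqrt{1 - F^2(\rho,\sigma)}$, a standard preliminary fact that the paper simply cites from~\cite{FV99} without proof. What you have written is instead a sketch of the paper's \emph{main theorem} (the existence of a loss-tolerant coin flipping protocol with bias $\approx 0.359$). These are entirely different statements: one is a general inequality between fidelity and trace distance for arbitrary quantum states, the other is the construction and analysis of a specific protocol. A proof of Proposition~\ref{FidelityInequality1} would proceed via Uhlmann's theorem and the explicit form of the trace distance for purifications (or, for the upper bound, via the joint diagonalization when one state is pure and then extending by concavity/monotonicity); none of that appears in your write-up.

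Even reading your text charitably as a proposal for the main theorem, there are inaccuracies relative to the paper's actual argument. You encrypt with a single bit $r$, whereas the paper uses two independent bits $r_1,r_2$, one XOR-masking the committed bit $c$ in each qubit; this per-register masking is what makes the state Bob sees before revealing $r_1,r_2$ diagonal in the computational basis with a product structure, which is exactly what drives the loss-tolerance computation $D(\xi_0^A,\xi_1^A) \le 2\lambda - 1$. Your claim that ``without the encryption step one recovers exactly Berlin \etal's bias $0.4$ on the two-fold repetition'' is also the wrong way around: the paper states that removing the encryption leaves the bias unchanged but destroys loss-tolerance, not that it reverts to $0.4$. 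Finally, the cheating-Alice analysis in the paper does not go through Proposition~\ref{DistanceBound} and a purification argument as you suggest, but through Proposition~\ref{MasterTheorem} (a fidelity-to-subspace bound) combined with Proposition~\ref{FidelityInequality2} and Lemma~\ref{DistanceLemma}.
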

\begin{proposition}~\cite{KN04}\label{FidelityInequality2}
For any quantum states $\rho,\sigma_0,\sigma_1$, we have
\[
F^2(\rho,\sigma_0) + F^2(\rho,\sigma_1) \le 1 + F(\sigma_0,\sigma_1)
\]
\end{proposition}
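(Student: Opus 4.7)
The plan is to reduce the mixed-state inequality to a pure-state inequality on three vectors via Uhlmann's theorem, and then handle the pure-state case by explicit parametrization and short algebraic manipulation.

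First, I would purify. Fix any purification $\ket{\psi}$ of $\rho$ in a sufficiently large Hilbert space $\mathcal{H}_A\otimes\mathcal{H}_B$. By Uhlmann's theorem, there exist purifications $\ket{\phi_0},\ket{\phi_1}$ of $\sigma_0,\sigma_1$ in the same space with $F(\rho,\sigma_i)=|\braket{\psi}{\phi_i}|$ for $i=0,1$. Since $F(\sigma_0,\sigma_1)$ is itself the maximum of $|\braket{\Phi_0}{\Phi_1}|$ over all pairs of purifications, the particular pair $\ket{\phi_0},\ket{\phi_1}$ yields $F(\sigma_0,\sigma_1)\ge|\braket{\phi_0}{\phi_1}|$. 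Hence it suffices to prove the pure-state inequality $|\braket{\psi}{\phi_0}|^2+|\braket{\psi}{\phi_1}|^2\le 1+|\braket{\phi_0}{\phi_1}|$ for arbitrary unit vectors $\ket{\psi},\ket{\phi_0},\ket{\phi_1}$.

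Second, I would set up the pure-state case. Let $a=|\braket{\psi}{\phi_0}|$, $b=|\braket{\psi}{\phi_1}|$, $c=|\braket{\phi_0}{\phi_1}|$. If $a^2+b^2\le 1$ the claim is immediate from $c\ge 0$. Otherwise, absorb phases into $\ket{\phi_0},\ket{\phi_1}$ so that $\braket{\psi}{\phi_0}=a$ and $\braket{\psi}{\phi_1}=b$ are real and nonnegative, and decompose $\ket{\phi_0}=a\ket{\psi}+\sqrt{1-a^2}\ket{\psi_0^\perp}$, $\ket{\phi_1}=b\ket{\psi}+\sqrt{1-b^2}\ket{\psi_1^\perp}$ with $\ket{\psi_i^\perp}$ orthogonal to $\ket{\psi}$. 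Then $\braket{\phi_0}{\phi_1}=ab+\sqrt{(1-a^2)(1-b^2)}\,\braket{\psi_0^\perp}{\psi_1^\perp}$. The identity $a^2b^2-(1-a^2)(1-b^2)=a^2+b^2-1$ shows that in the current regime $ab>\sqrt{(1-a^2)(1-b^2)}$, so the triangle inequality yields $c\ge ab-\sqrt{(1-a^2)(1-b^2)}$.

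Third, I would finish with elementary algebra. The target inequality reduces to $\sqrt{(1-a^2)(1-b^2)}\le 1+ab-a^2-b^2$. Both sides are nonnegative in this regime (for the right side, note that $a^2+b^2-ab\le a+b-ab=1-(1-a)(1-b)\le 1$), so squaring gives the equivalent $(1-a^2)(1-b^2)\le(1+ab-a^2-b^2)^2$. Writing $s=a^2+b^2$ and $p=ab$, this rearranges to $(s-1)(s-2p)\ge 0$, which holds because $s-1\ge 0$ by hypothesis and $s-2p=(a-b)^2\ge 0$. I expect the main obstacle to be not the algebra but the case split: the triangle-inequality bound on $c$ is nontrivial precisely when $a^2+b^2>1$, which happily coincides with the only regime in which the target inequality has content.
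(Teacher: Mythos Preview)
The paper does not actually prove this proposition: it is stated as a known fact with a citation to~\cite{KN04} and is used later as a black box. So there is no in-paper argument to compare against.

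Your proof is correct. The reduction via Uhlmann's theorem to the pure-state inequality $|\braket{\psi}{\phi_0}|^2+|\braket{\psi}{\phi_1}|^2\le 1+|\braket{\phi_0}{\phi_1}|$ is the standard route (and is essentially how the cited source proceeds). The algebra in the nontrivial regime $a^2+b^2>1$ checks out: the reverse triangle inequality indeed gives $c\ge ab-\sqrt{(1-a^2)(1-b^2)}$ once you verify $ab>\sqrt{(1-a^2)(1-b^2)}$, and the factorization $(s-1)(s-2p)\ge 0$ closes the argument cleanly. One small remark: in the step bounding $c$ from below you implicitly use $c\ge ab-\sqrt{(1-a^2)(1-b^2)}\,|\braket{\psi_0^\perp}{\psi_1^\perp}|\ge ab-\sqrt{(1-a^2)(1-b^2)}$, which is fine since $|\braket{\psi_0^\perp}{\psi_1^\perp}|\le 1$; you might make that explicit.
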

\begin{proposition}~\cite{Joz94}\label{AdditiveFidelity}
For any quantum states $\rho,\sigma_0,\sigma_1$, we have
\[
F^2(\rho,\sum_{i}p_i\sigma_i) \ge \sum_{i}p_i F^2(\rho,\sigma_i)
\]
\end{proposition}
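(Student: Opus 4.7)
The plan is to exhibit an explicit protocol combining a two-fold parallel repetition of Berlin et al.'s scheme with the classical encryption step $E_r$ sketched in Section~2, verify correctness and loss tolerance under honest execution, and then separately upper bound cheating Alice's and cheating Bob's winning probabilities, tuning the single qubit encoding angle so that the two bounds meet at $1/2 + \eps$ with $\eps \approx 0.359$.

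\textbf{Protocol and honest analysis.} Alice picks commitments $a_1, a_2 \in \zo$, blinding bits $x_1, x_2 \in \zo$, and a mask $r \in \zo$. She prepares two Berlin et al.\ single-qubit states $\ket{\phi_{a_i, x_i}}$ at encoding angle $\theta$ and sends them to Bob together with a classical ciphertext of $(a_1, a_2, x_1, x_2)$ under the one-bit key $r$. Bob picks $b \in \zo$, measures the two qubits in bases dictated by $b$, and announces success only if both measurements succeed. On success Alice reveals $r$ (and hence the openings $a_i, x_i$); Bob verifies consistency and the coin is declared to be $a_1 \oplus a_2 \oplus b$. On failure the protocol restarts with fresh randomness, which is safe because the pre-reveal marginal state seen by Bob is identical in both branches of $r$. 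Unbiasedness for honest executions is immediate from the uniformity of $b$ and the $a_i$, and loss tolerance follows since an honest Bob aborts only on genuine measurement failures and the restart is informationless.

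\textbf{Cheating Alice.} I would bound $P_A^*$ via the standard reduction to two-state discrimination: Alice may send any joint state, wait for Bob's bit $b$, and then open whichever tuple $(r, a_1, a_2, x_1, x_2)$ best steers the parity $a_1 \oplus a_2$ toward her desired outcome. Because $r$ is her own randomness the encryption step does not restrict her, so the analysis essentially reduces to the two-fold parallel repetition of Berlin et al. Applying Propositions~\ref{FidelityInequality2} and \ref{AdditiveFidelity} to the mixtures Bob accepts in the two branches yields a bound $P_A^* \le 1/2 + \alpha(\theta)$ for an explicit function $\alpha$.

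\textbf{Cheating Bob.} This is the main obstacle, and is where the encryption buys the improvement. Bob's only way of exploiting loss tolerance is to selectively request a restart, but by design his restart decision must be based on the marginal state over $r$, which by construction of $E_r$ is independent of the bit $a_1 \oplus a_2$ Alice has committed to. Hence the Berlin et al.\ style attack of repeatedly resetting until he acquires distinguishing information is neutralized. Once Bob declares success, Alice reveals $r$ and Bob must maximize his guess of $a_1 \oplus a_2$; Proposition~\ref{DistanceBound} then bounds this guessing probability by the trace distance between the two post-reveal density matrices corresponding to the committed bit, which in turn is controlled by $\theta$ via Proposition~\ref{FidelityInequality1}. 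Optimizing over Bob's joint success-POVM and subsequent guessing measurement yields $P_B^* \le 1/2 + \beta(\theta)$. Setting $\alpha(\theta) = \beta(\theta)$ and solving numerically gives the stated bias $\eps \approx 0.359$, the gain over Berlin et al.'s $0.4$ coming from the genuine slack that $E_r$ introduces in $\beta$ without affecting $\alpha$.
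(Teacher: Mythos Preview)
Your proposal does not address the stated proposition at all. The proposition you were asked to prove is the concavity of the squared fidelity in its second argument, namely
\[
F^2\!\left(\rho,\sum_i p_i \sigma_i\right) \ge \sum_i p_i\, F^2(\rho,\sigma_i),
\]
which the paper lists under ``Useful facts'' with a citation to \cite{Joz94} and does not prove on its own. What you wrote instead is a high-level sketch of the paper's \emph{main theorem} (the existence of a loss-tolerant coin flipping protocol with bias $\approx 0.359$): you describe a two-fold repetition with an encryption layer, argue correctness, and outline separate bounds on $P_A^*$ and $P_B^*$. None of this bears on the inequality above.

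A correct proof of the proposition would invoke the variational (Uhlmann) characterization of fidelity or, equivalently, the joint concavity of $(\rho,\sigma)\mapsto\mathrm{tr}\sqrt{\rho^{1/2}\sigma\rho^{1/2}}$: fix a purification $\ket{\psi}$ of $\rho$; for each $i$ choose a purification $\ket{\phi_i}$ of $\sigma_i$ with $|\braket{\psi}{\phi_i}| = F(\rho,\sigma_i)$; then $\sum_i p_i \ketbra{\phi_i}{\phi_i}$ is an extension of $\sum_i p_i \sigma_i$, and monotonicity of fidelity under partial trace together with Proposition~\ref{AdditiveFidelity}-style averaging over the $\ket{\phi_i}$ yields the bound. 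Your sketch contains no such argument, so as a proof of the stated proposition it is simply missing.
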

\begin{proposition}~\cite{Hel67}
Suppose Alice has a bit $c \in_R \zo$ unknown to Bob. Alice sends a quantum state $\rho_c$ to Bob. We have
\[
\Pr[\mbox{Bob guesses } c] \le \frac{1}{2} + \frac{D(\rho_0,\rho_1)}{2}
\]
\end{proposition}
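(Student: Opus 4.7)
The plan is to reduce Bob's optimal strategy to a two-outcome POVM optimization, then identify the optimum with the trace norm of $\rho_0 - \rho_1$. Without loss of generality, Bob's measurement is described by a POVM $\{M_0, M_1\}$ with $M_0 + M_1 = I$, where outcome $i$ is interpreted as the guess $c' = i$. Since $c$ is uniform in $\zo$, the success probability is
\[
\Pr[\text{Bob guesses } c] = \tfrac{1}{2}\Tr(M_0 \rho_0) + \tfrac{1}{2}\Tr(M_1 \rho_1).
\]
I would substitute $M_1 = I - M_0$ and use $\Tr(\rho_1) = 1$ to rewrite this as $\tfrac{1}{2} + \tfrac{1}{2}\Tr(M_0(\rho_0 - \rho_1))$, so it suffices to bound $\Tr(M_0 \Delta)$ over $0 \preceq M_0 \preceq I$, where $\Delta := \rho_0 - \rho_1$.

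Next I would analyze $\Delta$ spectrally. Since $\Delta$ is Hermitian and $\Tr(\Delta) = 0$, its Jordan decomposition reads $\Delta = P - N$, where $P, N \succeq 0$ have orthogonal supports and $\Tr(P) = \Tr(N)$. For any admissible $M_0$,
\[
\Tr(M_0 \Delta) = \Tr(M_0 P) - \Tr(M_0 N) \le \Tr(P),
\]
because $M_0 \preceq I$ gives $\Tr(M_0 P) \le \Tr(P)$ and $M_0, N \succeq 0$ gives $\Tr(M_0 N) \ge 0$. Equality is attained by choosing $M_0 = \Pi_P$, the projector onto the support of $P$, which acts as the identity on $P$ and annihilates $N$. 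Moreover, applying $|A| = \sqrt{A^\dagger A}$ to the Hermitian operator $\Delta$ yields $|\Delta| = P + N$, so (with the implicit trace in the definition of $D$ in the preliminaries) $D(\rho_0,\rho_1) = \tfrac{1}{2}\Tr|\Delta| = \tfrac{1}{2}(\Tr(P) + \Tr(N)) = \Tr(P)$.

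Combining the two steps yields $\Pr[\text{Bob guesses } c] \le \tfrac{1}{2} + \tfrac{1}{2} D(\rho_0,\rho_1)$ for every POVM strategy, which is the desired bound. The only subtle point is the spectral argument: justifying that the supremum of $\Tr(M_0 \Delta)$ over $0 \preceq M_0 \preceq I$ equals the sum of positive eigenvalues of $\Delta$, and matching this quantity to the definition of trace distance used in the preliminaries. Everything else is a one-line algebraic manipulation.
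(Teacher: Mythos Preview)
Your argument is correct and is exactly the standard proof of the Helstrom bound: reduce to optimizing $\Tr(M_0(\rho_0-\rho_1))$ over $0\preceq M_0\preceq I$, use the Jordan decomposition $\Delta=P-N$ to get $\Tr(M_0\Delta)\le\Tr(P)$, and identify $\Tr(P)=\tfrac12\Tr|\Delta|=D(\rho_0,\rho_1)$ via $\Tr(\Delta)=0$. The paper itself does not prove this proposition at all; it simply cites it as a known fact from~\cite{Hel67}, so there is nothing to compare against beyond noting that your derivation is the textbook one.
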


\section{The protocol}
\subsection{Quantum states used}
Consider the two orthonormal basis $\mathcal{B}^0(\lambda) = \{\ket{\phi^0_0(\lambda)},\ket{\phi^0_1(\lambda)}\}$ and $\mathcal{B}^1(\lambda) = \{\ket{\phi^1_0(\lambda)},\ket{\phi^1_1(\lambda)}\}$ for any $\lambda \in \mathbb{R}$ with:
\begin{eqnarray*}
\ket{\phi^0_0(\lambda)} & = & \sqrt{\lambda}\ket{0} + \sqrt{1-\lambda}\ket{1} \\
\ket{\phi^0_1(\lambda)} & = & \sqrt{1-\lambda}\ket{0} - \sqrt{\lambda}\ket{1}
\end{eqnarray*}
and 
\begin{eqnarray*}
\ket{\phi^1_0(\lambda)} & = & \sqrt{\lambda}\ket{0} - \sqrt{1-\lambda}\ket{1} \\
\ket{\phi^1_1(\lambda)} & = & \sqrt{1-\lambda}\ket{0} + \sqrt{\lambda}\ket{1} 
\end{eqnarray*}

$\ket{\phi_c^b}$ corresponds to the encoding of bit $c$ in basis $b$.

Finally, we define
\[\rho_c = \frac{1}{2}\sum_{i}\ketbra{\phi_c^i}{\phi_c^i} = \lambda\ketbra{c}{c} + (1-\lambda)\ketbra{1-c}{1-c}\]

\subsection{Berlin {\it{etal}}'s protocol}
\cadre{ \begin{center} Berlin {\it{etal}}'s protocol (parameter $\lambda$ omitted) \end{center}
\begin{enumerate}
\item Alice chooses at random $b \in_R \zo$ and $c \in_R \zo$ and sends $\ket{\phi_c^b}$ to Bob.
\item Bob chooses $b' \in_R \zo$ and measures the qubit he receives in basis $B_{b'}$. If his measurement fails, he announces it to Alice and they repeat the protocol from step 1. If the measurement succeeds continue.
\item Bob picks $c' \in_R \zo$ and sends $c'$ to Alice
\item Alice reveals $b,c$
\item If $b = b'$, Bob checks that what he measured corresponds to $\ket{\phi_c^b}$. If it doesn't match, he aborts.
\item The outcome of the protocol is $x = c \oplus c'$.
\end{enumerate}
} $ \ $ \\

This protocol is loss tolerant in the sense that a cheating Bob cannot gain advantage in the fact that he can restart the protocol when his measurement fails. This protocol has the following security parameters:
\begin{enumerate}
\item $P^*_A = \frac{3}{4} + \frac{\sqrt{\lambda(1-\lambda)}}{2}$
\item $P^*_B = \lambda$
\end{enumerate}
By taking $\lambda = 0.9$, we have $P^*_A = P^*_B = 0.9$ and  their protocol achieve a bias of $0.4$.
\subsection{Our protocol}
\cadre{ \begin{center} Our protocol \end{center}
\begin{enumerate}
\item Alice chooses at random $b_1,b_2 \in_R \zo$ ; $c \in_R \zo$ and $r_1,r_2 \in_R \zo$ sends two quantum registers $\ket{\phi_{c \oplus r_i}^{b_i}}$ for $i \in \{1,2\}$ to Bob.
\item Bob chooses $b'_1,b'_2 \in_R \zo$ and measures each register $i$ he receives in basis $B_{b'_i}$. If one of his measurements fails, he announces it to Alice and they repeat the protocol from step 1. If the measurement succeeds, Bob announces this fact to Alice and they continue.
\item Alice sends $r_1,r_2$ to Bob.
\item Bob picks $c' \in_R \zo$ and sends $c'$ to Alice
\item Alice reveals $b_1,b_2,c$
\item For each register $i$ for which $b_i = b'_i$, Bob checks that what he measured corresponds to $\ket{\phi_{c \oplus r_i}^{b_i}}$. If one of the measurements does not match, he aborts.
\item The outcome of the protocol is $x = c \oplus c'$.
\end{enumerate}
} $ \ $ \\

This protocol is closely related to a two-fold parallel repetition of Berlin {\it{etal}}'s protocol. Such a repetition would directly improve the bias if we did not require loss tolerance. We add an additionnal step in this protocol. Alice hides some information about the state she sends using $2$ private bits $r_1,r_2$ that she reveals as soon as Bob confirms that he measured successfully. As we will show, this makes the protocol loss-tolerant again.

\section{Security proofs}
If Alice and Bob are honest then Bob never aborts and $x = c \oplus c'$ is random. We now analyse separately cheating Alice and cheating Bob.

\subsection{Cheating Alice}
We consider a cheating Alice and an honest Bob. 

\subsubsection{General framework for checking Bob}
The way Bob checks is closely related to the following procedure
\begin{itemize}
\item Alice sends a state $\sigma$ in space $\spa{Y}$
\item At a later stage, Alice sends a bit $i$ to Bob in space $\spa{X}$
\item Bob checks that the first state Alice sends in $\spa{Y}$ is the state $\ket{\psi_i}$ for some state $\ket{\psi_i}$.
\end{itemize}
We want to show the following:
\begin{proposition}\label{MasterTheorem}
\[
\Pr[\textrm{ Alice passes Bob's test }] \le F^2(\sigma,L)
\]
where $L = \{\sum_{j}p_i\ketbra{\phi_j}{\phi_j} : \ \sum_{j}p_j = 1\}$ 
\end{proposition}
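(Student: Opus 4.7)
The plan is to reduce the claim to Uhlmann's theorem (the maximum inner product of two purifications in a common ambient space equals the fidelity) and then to optimize the mixture coefficients defining $\tau \in L$ by a Cauchy--Schwarz argument.

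I would first put a cheating Alice in a canonical form. Without loss of generality she prepares a pure state $\ket{\Psi}$ on $\spa{A} \otimes \spa{Y}$, sends $\spa{Y}$ to Bob, and later produces the bit $i$ by applying a unitary on $\spa{A}$ followed by a projective measurement of a dedicated qubit. Absorbing that unitary into $\ket{\Psi}$, I may write
\[
\ket{\Psi} = \ket{0}_{\spa{A}} \ket{\alpha_0}_{\spa{Y}} + \ket{1}_{\spa{A}} \ket{\alpha_1}_{\spa{Y}}
\]
for subnormalized vectors $\ket{\alpha_i}$ with $\braket{\alpha_0}{\alpha_0} + \braket{\alpha_1}{\alpha_1} = 1$. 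Bob's reduced state is $\sigma = \ketbra{\alpha_0}{\alpha_0} + \ketbra{\alpha_1}{\alpha_1}$, and Alice outputs $i$ and then passes Bob's projective check with joint probability $|\braket{\psi_i}{\alpha_i}|^2$, so
\[
\Pr[\text{Alice passes}] = |\braket{\psi_0}{\alpha_0}|^2 + |\braket{\psi_1}{\alpha_1}|^2.
\]

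Next, for any probabilities $q_0, q_1 \ge 0$ with $q_0 + q_1 = 1$ and any phases $\theta_0, \theta_1$, I consider the candidate purification
\[
\ket{\Phi} = \sqrt{q_0}\, e^{i\theta_0} \ket{0}_{\spa{A}} \ket{\psi_0}_{\spa{Y}} + \sqrt{q_1}\, e^{i\theta_1} \ket{1}_{\spa{A}} \ket{\psi_1}_{\spa{Y}},
\]
whose reduced state on $\spa{Y}$ is exactly $\tau = q_0 \ketbra{\psi_0}{\psi_0} + q_1 \ketbra{\psi_1}{\psi_1} \in L$. Choosing the phases $\theta_i$ so that every summand of $\braket{\Phi}{\Psi}$ is a nonnegative real, Uhlmann's theorem yields
\[
F(\sigma,\tau) \ge |\braket{\Phi}{\Psi}| = \sqrt{q_0}\,|\braket{\psi_0}{\alpha_0}| + \sqrt{q_1}\,|\braket{\psi_1}{\alpha_1}|.
\]

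Finally, I would pick $q_i = |\braket{\psi_i}{\alpha_i}|^2 / \Pr[\text{Alice passes}]$, which hits equality in Cauchy--Schwarz and gives $F(\sigma,\tau) \ge \sqrt{\Pr[\text{Alice passes}]}$, hence $F^2(\sigma,L) \ge F^2(\sigma,\tau) \ge \Pr[\text{Alice passes}]$, as required. The only delicate point is the canonical form for Alice: once one accepts that she can be assumed to perform a projective measurement on a purified ancilla and that a purification of any $\tau \in L$ can be constructed in the same $\spa{A} \otimes \spa{Y}$ used to purify $\sigma$, the rest is a routine application of Uhlmann plus Cauchy--Schwarz.
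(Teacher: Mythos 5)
Your proof is correct and is essentially the argument of the paper: both reduce Alice to a pure joint state on $\spa{X}\otimes\spa{Y}$ (your $\ket{\Psi}$, the paper's $\ket{\Omega}$), identify the passing probability with a squared overlap against a vector whose reduction on $\spa{Y}$ lies in $L$, and conclude via the Uhlmann relation between purification overlap and the fidelity of the reduced states. The only cosmetic difference is that the paper packages the optimization as $\max_{\ket{u}\in K}|\braket{\Omega}{u}|^2$ for the subspace $K$ spanned by $\{\ket{i}\otimes\ket{\phi_i}\}$ followed by monotonicity of fidelity under partial trace, whereas you exhibit the optimal vector explicitly by the Cauchy--Schwarz choice $q_i \propto |\braket{\psi_i}{\alpha_i}|^2$.
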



\begin{proof}
Let $\sigma$ the first state in $\spa{Y}$ sent by Alice and
let $\widetilde{\sigma}$ the state in $\spa{XY}$ after Alice reveals $i$. Since Bob immediately measures the register $\spa{X}$ in the computational basis, there is an state $\widetilde{\sigma}$ which will give the best cheating probability of the form
$
\widetilde{\sigma} = \sum_i p_i \ketbra{i}{i} \otimes \ketbra{\psi_i}{\psi_i} 
$ and 
\[ \Pr[\textrm{ Alice passes Bob's test }] = \sum_i |\ketbra{\psi_i}{\phi_i}|^2 \]
Similarly, if we fix $\widetilde{\sigma} = \ketbra{\Omega}{\Omega}$ where $\ket{\Omega} =  \sum_{i} \sqrt{p_i} \ket{i,\phi_i}$, we get that $ \Pr[\textrm{ Alice passes Bob's test }] = \sum_i |\ketbra{\psi_i}{\phi_i}|^2 $ This means that we can suppose w.log that after the last step, the state in $\spa{XY}$ is pure.

Let $\widetilde{\sigma} = \ketbra{\Omega}{\Omega}$ where $\ket{\Omega} =  \sum_{i} \sqrt{p_i} \ket{i,\phi_i}$. Let $K$ subspace of quantum pure states spanned by $\{\ket{i} \otimes \ket{\phi_i}\}$. Let $P_K = \sum_i \ketbra{i}{i} \otimes \ketbra{\phi_i}{\phi_i}$ the projection on subspace $K$. Bob's check is equivalent to projecting on the subspace $K$.
\begin{eqnarray*} 
\Pr[\textrm{ Alice passes Bob's test }] & = & tr(P_K\widetilde{\sigma}P_K) \\ 
& = & tr(P_K\ketbra{\Omega}{\Omega}P_K) = \max_{\ket{u} \in L} |\braket{\Omega}{u}|^2  \\  
& \le & \max_{\ket{u} \in K}F^2(\Tr_{\spa{X}}(\ketbra{\Omega}{\Omega}),\Tr_{\spa{X}}\ketbra{u}{u}) \\
& \le & \max_{\ket{u} \in K}F^2(\sigma,\Tr_{\spa{X}}\ketbra{u}{u}) \\ 
& \le & F^2(\sigma,L) \quad \textrm{since } \forall \ket{u} \in K, \ \Tr_{\spa{X}}\ketbra{u}{u} \in L
\end{eqnarray*} 
\end{proof}

\subsection{Proof of security for cheating Alice}~\label{CheatingAliceSec}
We consider a cheating Alice and an honest Bob. For the sake of the analysis, we can suppose that honest Bob doesnt' have losses when he measures (this does not help Alice). Our protocol says that Bob measures each register $i$ in a random basis $B_{b'_i}$ and performs a check if this basis corresponds to the basis $B_{b_i}$ in which Alice encoded $c$. Similarly, we could say that Bob performs this measurement at the very end (still picking $b'_i$ at random). In this case, we are in the framework of the previous subsection except that with some probability, Bob chooses the wrong basis and does not check anything.

Suppose Alice wants to reveal $c$ in our protocol. Let $\xi$ the state in $\spa{XY}$ she sends at state $1$. Let $\xi_X = \Tr_{\spa{Y}}\xi$ and $\xi_Y = \Tr_{\spa{X}}\xi$.
Let
$
L_c  =  \{\sum_{i \in \zo}p_i\ketbra{\phi^i_c}{\phi^i_c}\} $

We have the following cases:
\begin{itemize}
\item Bob flipped $b'_1 \neq b_1$ and $b'_2 \neq b_2$. Bob does nt check anything Alice successfully reveals $c$ with probability $1$.
\item Bob flipped $b'_1 = b_1$ and $b'_2 \neq b_2$. Bob checks the first register. From Proposition~\ref{MasterTheorem},  Alice successfully reveals $c$ with probability no greater than $F^2(\xi_X,L_c)$.
\item Bob flipped $b'_1 \neq b_1$ and $b'_2 = b_2$. Bob checks the second register. Similarly,  Alice successfully reveals $c$ with probability no greater than $F^2(\xi_Y,L_c)$.
\item Bob flipped $b'_1 = b_1$ and $b'_2 = b_2$. Bob checks both registers. In the same way,  Alice successfully reveals $c$ with probability no greater than $F^2(\xi,L_c^{\otimes 2})$.
\end{itemize}

This gives us
\[
\Pr[\textrm{ Alice successfully reveals } c] = \frac{1}{4}\left(1 + F^2(\xi_X,L_c) + F^2(\xi_Y,L_c) + F^2(\xi,L_c{\otimes 2})\right) \]

We will now need the following Lemma
\begin{lemma}\label{DistanceLemma}\[
F(L_0,L_1) \le 2\sqrt{\lambda(1-\lambda)} \]
\end{lemma}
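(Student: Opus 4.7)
The plan is to reduce the problem to a two-parameter optimization by exploiting the fact that each $L_c$ is a one-parameter family of qubit density matrices, and then to apply the closed-form qubit fidelity formula.

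First I would parameterize the two line segments explicitly. Writing any $\sigma \in L_0$ as $p\ketbra{\phi^0_0}{\phi^0_0} + (1-p)\ketbra{\phi^1_0}{\phi^1_0}$, the two rank-one terms have identical diagonals $(\lambda,1-\lambda)$ in the $\{\ket 0,\ket 1\}$ basis and opposite off-diagonal entries $\pm\sqrt{\lambda(1-\lambda)}$, so $\sigma$ is a real symmetric matrix with diagonal $(\lambda,1-\lambda)$ and off-diagonal entry $t\sqrt{\lambda(1-\lambda)}$ for $t = 2p-1 \in [-1,1]$. An analogous computation shows that any $\sigma' \in L_1$ is a real symmetric matrix with diagonal $(1-\lambda,\lambda)$ and off-diagonal $s\sqrt{\lambda(1-\lambda)}$ for some $s \in [-1,1]$.

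Next I would apply the qubit fidelity identity $F(\sigma,\sigma')^2 = \Tr(\sigma\sigma') + 2\sqrt{\det(\sigma)\det(\sigma')}$, which follows from $(\sqrt{\mu_1}+\sqrt{\mu_2})^2 = \Tr A + 2\sqrt{\det A}$ for any $2\times 2$ positive matrix $A$ with eigenvalues $\mu_1,\mu_2$, applied to $A = \sigma^{1/2}\sigma'\sigma^{1/2}$. Direct computation gives $\Tr(\sigma\sigma') = 2\lambda(1-\lambda)(1+ts)$, $\det(\sigma) = (1-t^2)\lambda(1-\lambda)$, and $\det(\sigma') = (1-s^2)\lambda(1-\lambda)$, so
\[
F(\sigma,\sigma')^2 = 2\lambda(1-\lambda)\left(1 + ts + \sqrt{(1-t^2)(1-s^2)}\right).
\]
Writing $t = \cos\alpha$ and $s = \cos\beta$, the bracketed quantity becomes $1 + \cos\alpha\cos\beta + |\sin\alpha\sin\beta| \le 1 + \cos(\alpha-\beta) \le 2$, which yields the claimed bound $F(\sigma,\sigma') \le 2\sqrt{\lambda(1-\lambda)}$.

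No step is a serious obstacle; the calculation is routine once the parameterization is set up. The only conceptual point worth flagging is that, because fidelity is concave in each argument (Proposition~\ref{AdditiveFidelity}), the maximum over the convex sets $L_0, L_1$ is not a priori attained at pure-state extremes, so one cannot simply check the four pairs $F(\ket{\phi^i_0},\ket{\phi^j_1})$; the full two-parameter optimization above is what rules out higher fidelities that might arise from genuinely mixed elements of $L_0$ and $L_1$.
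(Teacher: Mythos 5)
Your proof is correct, and it takes a genuinely different route from the paper's. The paper argues indirectly through the trace distance: every $\rho_0 \in L_0$ has $\triple{0}{\rho_0}{0} = \lambda$ and every $\rho_1 \in L_1$ has $\triple{0}{\rho_1}{0} = 1-\lambda$, so Proposition~\ref{DistanceBound} gives $D(\rho_0,\rho_1) \ge 2\lambda-1$, and the Fuchs--van de Graaf inequality $D \le \sqrt{1-F^2}$ then yields $F(\rho_0,\rho_1) \le \sqrt{1-(2\lambda-1)^2} = 2\sqrt{\lambda(1-\lambda)}$ (the paper's text cites Proposition~\ref{FidelityInequality2} at this step, but the inequality actually used is the one from Proposition~\ref{FidelityInequality1}). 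That argument is three lines and needs no parameterization, because the diagonal entries of every state in $L_c$ are fixed by construction. Your approach instead parameterizes both segments exactly and applies the closed-form qubit formula $F^2(\sigma,\sigma') = \Tr(\sigma\sigma') + 2\sqrt{\det\sigma\det\sigma'}$; your algebra checks out, giving $F^2(\sigma,\sigma') = 2\lambda(1-\lambda)\left(1+ts+\sqrt{(1-t^2)(1-s^2)}\right)$. What this buys is exactness: you identify the maximizers (any $t=s$, e.g.\ $t=s=0$, the states $\rho_0,\rho_1$ themselves) and thereby show the lemma is tight, i.e.\ $F(L_0,L_1) = 2\sqrt{\lambda(1-\lambda)}$, whereas the paper's route only certifies the upper bound. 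One quibble with your closing caveat: while you are right that concavity of fidelity means checking the four pure pairs does not constitute a proof of the upper bound, in this instance the pure pair does already attain the maximum, since $\braket{\phi^0_0}{\phi^1_1} = 2\sqrt{\lambda(1-\lambda)}$ --- so the mixed elements do not produce strictly higher fidelities here; they merely tie the pure-state optimum, and your full two-parameter optimization is what certifies that nothing exceeds it.
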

\begin{proof}
Let $\rho_0 \in L_0$ and $\rho_1 \in L_1$ such that $F(\rho_0,\rho_1) = F(L_0,L_1)$. By definition of $L_0$, we have $\triple{0}{\rho_0}{0} = \lambda$ and $\triple{0}{\rho_1}{0} = 1 - \lambda$. This gives us $D(\rho_0,\rho_1) \ge 2\lambda - 1$. Using Proposition~\ref{FidelityInequality2}, we have 
\begin{eqnarray*}
F(\rho_0,\rho_1) & \le & \sqrt{1 - D^2(\rho_0,\rho_1)} \\
& \le & \sqrt{1 - 4\lambda^2 + 4\lambda - 1} \\
& \le & 2\sqrt{\lambda(1-\lambda)}
\end{eqnarray*}
\end{proof}
We can now prove our main statement
\begin{proposition}
\[ P^*_A \le \frac{1}{2} +\frac{1}{2}\left(\frac{1 + f(\lambda)}{2}\right)^2
\] 
where $f(\lambda) = 2\sqrt{\lambda(1-\lambda)}$
\end{proposition}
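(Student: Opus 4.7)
Because Bob picks $c'$ uniformly at random and independently of everything Alice has committed to, a cheating Alice aiming at a fixed outcome $x\in\{0,1\}$ succeeds with probability at most $\tfrac{1}{2}(p_0+p_1)$, where $p_c$ is the reveal probability just displayed. The plan is therefore to bound $p_0+p_1$ by applying Proposition~\ref{FidelityInequality2} to each of the three fidelity-pair sums that arise. For each marginal $\xi_Z$ with $Z\in\{X,Y\}$, picking the states in $L_0$ and $L_1$ that realize the maximum fidelities with $\xi_Z$ and applying Proposition~\ref{FidelityInequality2} yields $F^2(\xi_Z,L_0)+F^2(\xi_Z,L_1)\le 1+F(L_0,L_1)\le 1+f(\lambda)$ by Lemma~\ref{DistanceLemma}. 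Running the same reduction on $\xi$ itself gives $F^2(\xi,L_0^{\otimes 2})+F^2(\xi,L_1^{\otimes 2})\le 1+F(L_0^{\otimes 2},L_1^{\otimes 2})$, so the whole proof reduces to the single inequality $F(L_0^{\otimes 2},L_1^{\otimes 2})\le f(\lambda)^2$.

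This last inequality is the main obstacle: joint concavity of fidelity means that convex combinations of product states in $L_c^{\otimes 2}$ may in principle carry strictly more fidelity than the product-state bound $F(L_0,L_1)^2$, so the inequality does not follow for free from the single-register case. My plan is to use Uhlmann's theorem. Writing an arbitrary $\sigma_c=\sum_{ij}p^c_{ij}\ketbra{\phi^i_c,\phi^j_c}{\phi^i_c,\phi^j_c}\in L_c^{\otimes 2}$ and purifying as $\ket{\Omega_c}=\sum_{ij}\sqrt{p^c_{ij}}\,\ket{\phi^i_c,\phi^j_c}\otimes\ket{ij}$ on a four-dimensional ancilla, Uhlmann gives $F(\sigma_0,\sigma_1)=\max_{U}|\bra{\Omega_0}(I\otimes U)\ket{\Omega_1}|$ with $U$ ranging over unitaries on the ancilla. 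The structural observation that drives the argument is that the overlap $\langle\phi^i_0|\phi^k_1\rangle$ equals $f(\lambda)$ when $i\neq k$ and $0$ when $i=k$, so only the terms with $(k,l)=(\bar{i},\bar{j})$ survive and the expansion of the inner product collapses to
\[
f(\lambda)^2\cdot\max_U\Bigl|\sum_{ij}\sqrt{p^0_{ij}\,p^1_{\bar{i}\bar{j}}}\;u_{ij,\bar{i}\bar{j}}\Bigr|.
\]
The unitary maximum of this linear form in the entries of $U$ is precisely the nuclear norm of the underlying $4\times 4$ matrix, which, being supported on a single (generalized) anti-diagonal, evaluates to $\sum_{ij}\sqrt{p^0_{ij}\,p^1_{\bar{i}\bar{j}}}$; Cauchy-Schwarz then bounds this by $(\sum_{ij}p^0_{ij})^{1/2}(\sum_{ij}p^1_{\bar{i}\bar{j}})^{1/2}=1$, yielding $F(\sigma_0,\sigma_1)\le f(\lambda)^2$ as required.

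Collecting the three bounds, $p_0+p_1\le \tfrac{1}{4}\bigl(2+2(1+f(\lambda))+(1+f(\lambda)^2)\bigr)=1+(1+f(\lambda))^2/4$, and dividing by two yields $P^*_A\le \tfrac{1}{2}+\tfrac{1}{2}\bigl(\tfrac{1+f(\lambda)}{2}\bigr)^2$, as claimed.
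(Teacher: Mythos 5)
Your proof is correct, and at the top level it follows the same accounting as the paper: average the reveal probability over $c$, apply Proposition~\ref{FidelityInequality2} to the three fidelity-pair sums, invoke Lemma~\ref{DistanceLemma}, and collect $\frac{1}{4}\bigl(4+(1+f(\lambda))^2\bigr)$. Where you genuinely diverge is the two-register term. The paper's chain silently replaces $F(L_0^{\otimes 2},L_1^{\otimes 2})$ by $F^2(L_0,L_1)$ (and, incidentally, cites Proposition~\ref{FidelityInequality1} where Proposition~\ref{FidelityInequality2} is the one being used), offering no justification. You correctly identify that this step is not free: the set arising from Proposition~\ref{MasterTheorem} when Bob checks both registers is the \emph{convex hull} of $\{\ketbra{\phi^i_c\,\phi^j_c}{\phi^i_c\,\phi^j_c}\}_{i,j}$, which contains correlated mixtures that are not tensor products of elements of $L_c$, so multiplicativity of fidelity on product states does not by itself give the bound, and joint concavity cuts the wrong way. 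Your Uhlmann argument closes this: the overlap structure $\braket{\phi^i_0}{\phi^k_1}=f(\lambda)$ for $k=\bar{i}$ and $0$ for $k=i$ collapses the purification inner product onto a monomial $4\times 4$ matrix, whose nuclear norm is the sum of its entries, and Cauchy--Schwarz then gives $F(\sigma_0,\sigma_1)\le f(\lambda)^2$ for \emph{all} pairs in the convex hulls (tight, since uniform weights achieve it). So your write-up proves exactly what the paper's displayed inequality needs but does not argue; the paper's version buys brevity, yours buys an actual proof of the one step in the proposition that has real content.
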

\begin{proof} 
We suppose w.log that Alice wants final outcome $x=0$.  This means that she has to reveal $c = c'$. Let $\xi$ the state sent by Alice and let $\xi_X = \Tr_{\spa{Y}}\xi$ and $\xi_X = \Tr_{\spa{Y}}\xi$. Since $c'$ is random, we have
\begin{eqnarray*}
P^*_A & = & \frac{1}{2}\sum_{c \in \zo} \Pr[\textrm{ Alice successfully reveals } c] \\
& \le & \frac{1}{2}\sum_{c \in \zo} \frac{1}{4}\left(1 + F^2(\xi_X,D_c) + F^2(\xi_Y,D_c) + F^2(\xi,DD_c)\right) \\
& \le & \frac{1}{8}\left(2 + 1 + F(D_0,D_1) + 1 + F(D_0,D_1) + 1 + F(DD_0,DD_1)\right) \ \ (Proposition~\ref{FidelityInequality1}) \\
& \le & \frac{1}{2} + \frac{1}{2}\left(\frac{1}{4} + \frac{1}{2}F(D_0,D_1) + \frac{1}{4}F^2(D_0,D_1)\right) \\
& \le &  \frac{1}{2} +\frac{1}{2}\left(\frac{1 + f(\lambda)}{2}\right)^2 \ (f(\lambda) \ge F(D_0,D_1) \textrm{ from Lemma}~\ref{DistanceLemma} )  
\end{eqnarray*}
\end{proof}

\subsection{Cheating Bob}
The main part here is to show the loss-tolerance of the protocol. This means that a cheating Bob cannot take advantage of the fact that he's allowed to reset the protocol in case one of his measurements failed.
\subsection{Cheat Sensitivity}
For a fixed $c$ and $r_1,r_2$, let $\xi^{r_1,r_2}_c$ sent by Alice. We have
\begin{eqnarray*}
\xi^{r_1,r_2}_c & = & \frac{1}{4}\sum_{ b_1,b_2 \in \zo} \ketbra{\phi_{c \oplus r_1}^{b_1} \phi_{c \oplus r_2}^{b_2}}{\phi_{c \oplus r_1}^{b_1} \phi_{c \oplus r_2}^{b_2}} \\
& = & \rho_{c \oplus r_1} \otimes \rho_{c \oplus r_2} \\
& = & \sum_{u,v \in \zo} p_{c \oplus r_1, c \oplus r_2}^{u,v}\ketbra{u,v}{u,v}
\end{eqnarray*}
where: if $x=y$ then $p_x^y = \lambda$ ; if $x\neq y$ then $p_x^y = 1 - \lambda$ and $ p_{c \oplus r_1, c \oplus r_2}^{u,v} = p_{c \oplus r_1}^u \cdot p_{c \oplus r_2}^v$.

When receiving $\xi$, Bob performs a quantum operation  
\[ A(\ket{u,v}) = \alpha_{u,v}\ket{\psi_{u,v}}\ket{0}_{\spa{O}} + \beta_{u,v}\ket{\omega_{u,v}}\ket{1}_{\spa{O}}\] where $\spa{O}$ is the space that Bob measures to determine whether he should announce that he succeeded the measurement or not. The outcome 0 in space $\spa{O}$ corresponds to the outcome where the protocol continues. In a way, the cheating Bob postselects on the outcome being $0$ since if he obtains $1$, he decides to start the protocol again. Once Bob successfully measured and after Alice sends $r_1,r_2$, Bob has the following state depending on the operation $A$ he performed averaging on  $r_1,r_2$.
\[
\xi_c^{A}  =  \frac{1}{S} \sum_{\substack{r_1,r_2 \in \zo \\ u,v \in \zo}}  p_{c \oplus r_1, c \oplus r_2}^{u,v} \Gamma_{u,v} \ketbra{r_1,r_2, \psi_{u,v}}{r_1,r_2, \psi_{u,v}} \]
where 
\begin{itemize}
\item The $\Gamma_{u,v}$'s are arbitrary real numbers. These numbers depend on the $\alpha_{u,v}$'s. We assume that Bob can choose any value for these numbers.
\item The $\ket{\psi_{u,v}}$'s are not necessarily orthogonal. 
\item $S$ is a normalization factor. 
\end{itemize}

\begin{proposition}
$\forall A$, $D(\xi_0^A,\xi_1^A) \le D(\xi_0,\xi_1)$ where $\xi_c = \rho^{\otimes 2}_c$. 
\end{proposition}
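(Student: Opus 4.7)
My plan is to compute $D(\xi_0^A, \xi_1^A)$ directly by exploiting the classical--quantum structure of $\xi_c^A$ that comes from Alice's one-time-pad encryption with the bits $r_1, r_2$. The conceptual content I expect to use is that this encryption pushes all of the $c$-dependence of $\xi_c^A$ out of the quantum register and into the classical register holding $r_1, r_2$. Since Bob's operation $A$ and the postselection both act only on the quantum register, they cannot amplify any $c$-dependence, and the resulting distinguishing advantage should be no larger than that for the unencrypted state $\rho_c^{\otimes 2}$.

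Concretely, I would first make the substitution $s_i = c \oplus r_i$ in the sum defining $\xi_c^A$. A short check, using $\sum_{r_1, r_2 \in \zo} p_{c \oplus r_1, c \oplus r_2}^{u,v} = 1$, shows that $S = \sum_{u,v} \Gamma_{u,v}$ and is $c$-independent, and the substitution puts $\xi_c^A$ in the form
\[
\xi_c^A = \sum_{s_1, s_2 \in \zo} \ketbra{s_1 \oplus c,\, s_2 \oplus c}{s_1 \oplus c,\, s_2 \oplus c} \otimes \nu_{s_1, s_2},
\]
where $\nu_{s_1, s_2} = \tfrac{1}{S}\sum_{u, v} p_{s_1, s_2}^{u, v}\, \Gamma_{u, v} \ketbra{\psi_{u, v}}{\psi_{u, v}}$ is manifestly independent of $c$. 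Reindexing by $t_i = s_i \oplus c$ brings $\xi_0^A$ and $\xi_1^A$ into a common block-diagonal form on the classical register; the blocks being orthogonal gives $D(\xi_0^A, \xi_1^A) = \tfrac{1}{2}\sum_{t_1, t_2} \|\nu_{t_1, t_2} - \nu_{\bar t_1, \bar t_2}\|_1$, and the triangle inequality applied within each block bounds this by $\tfrac{1}{2S}\sum_{t_1, t_2, u, v} |p_{t_1, t_2}^{u,v} - p_{\bar t_1, \bar t_2}^{u,v}|\, \Gamma_{u, v}$. A brief case analysis using $p_{t_1, t_2}^{u,v} = p_{t_1}^u p_{t_2}^v$, with each factor equal to $\lambda$ or $1-\lambda$, shows that for every $(u, v)$ the inner sum over $(t_1, t_2)$ equals $2(2\lambda - 1)$, so combining with $S = \sum_{u,v} \Gamma_{u,v}$ yields $D(\xi_0^A, \xi_1^A) \le 2\lambda - 1$. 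Finally, $D(\xi_0, \xi_1) = D(\rho_0^{\otimes 2}, \rho_1^{\otimes 2}) = 2\lambda - 1$ is immediate from a one-line computation on the diagonal operators.

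I do not expect a real obstacle here: once the right reindexing is in place, the argument is a direct calculation. The one subtle point is that the triangle inequality must be applied \emph{after} the change of variables that makes $\nu_{s_1, s_2}$ independent of $c$, since applying it too early would destroy the XOR-cancellations coming from the one-time pad and yield a strictly weaker bound. Recognising that this is precisely where Alice's encryption neutralises Bob's freedom to postselect is the crux of the argument, and the reason why adding the $r_1, r_2$ step restores loss-tolerance without worsening the bias.
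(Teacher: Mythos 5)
Your proposal is correct and takes essentially the same route as the paper's proof: the same decomposition of $\xi_c^A$ into coefficients $p^{u,v}\Gamma_{u,v}$ over a fixed family of pure states, the same observation that $S=\sum_{u,v}\Gamma_{u,v}$, and the same case analysis over $(r_1,r_2)$ versus $(u,v)$ giving $2\lambda-1=D(\xi_0,\xi_1)$; your explicit block-diagonalization over the classical register followed by a triangle inequality is just a repackaging of the paper's direct appeal to Proposition~\ref{ConvOfTraceDistance}. Your closing caveat is slightly overstated, though: the paper applies that convexity bound \emph{before} any reindexing and the XOR cancellations survive intact, since the comparison $|p_{r_1,r_2}^{u,v}-p_{1\oplus r_1,1\oplus r_2}^{u,v}|\,\Gamma_{u,v}$ is made per index and vanishes on the cross terms.
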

\begin{proof}
Let's fix $A$. We have
\[
D(\xi_0^A,\xi_1^A) = \\
\frac{1}{S}D(\sum_{\substack{r_1,r_2 \in \zo \\ u,v \in \zo}}  p_{r_1,r_2}^{u,v} \Gamma_{u,v} \ketbra{r_1,r_2, \psi_{u,v}}{r_1,r_2, \psi_{u,v}},\sum_{\substack{r_1,r_2 \in \zo \\ u,v \in \zo}}  p_{1 \oplus {r_1}, 1 \oplus {r_2}}^{u,v} \Gamma_{u,v} \ketbra{r_1,r_2, \psi_{u,v}}{r_1,r_2, \psi_{u,v}}) \]
from convexity of the statistical distance (Proposition~$\ref{ConvOfTraceDistance}$) , we have
\begin{eqnarray*}
D(\xi_0^A,\xi_1^A) & \le & \frac{1}{S}\Delta\left(\{ p_{r_1,r_2}^{u,v} \Gamma_{u,v}\}_{\substack{r_1,r_2 \in \zo \\ u,v \in \zo}},\{p_{1 \oplus {r_1}, 1 \oplus {r_2}}^{u,v} \Gamma_{u,v}\}_{\substack{r_1,r_2 \in \zo \\ u,v \in \zo}}\right) \\
& \le & \frac{1}{2S}\sum_{\substack{r_1,r_2 \in \zo \\ u,v \in \zo}}|p_{r_1,r_2}^{u,v} \Gamma_{u,v} - p_{1 \oplus {r_1}, 1 \oplus {r_2}}^{u,v} \Gamma_{u,v}| \\
& \le & \frac{1}{2S}\sum_{u,v}\Gamma_{u,v} \sum_{r_1,r_2}|p_{r_1,r_2}^{u,v} -  p_{1 \oplus {r_1}, 1 \oplus {r_2}}^{u,v}| 
\end{eqnarray*}
To calculate this sum, if $(r_1,r_2) = (u,v)$ then $p_{r_1,r_2}^{u,v} = \lambda^2$ and $p_{1 \oplus {r_1}, 1 \oplus {r_2}}^{u,v} = (1 - \lambda)^2$. If $(r_1,r_2) = (\overline{u},\overline{v})$ then $p_{r_1,r_2}^{u,v} = (1 - \lambda)^2$ and $p_{1 \oplus {r_1}, 1 \oplus {r_2}}^{u,v} = \lambda^2$. In the other cases, $p_{r_1,r_2}^{u,v} = p_{1 \oplus {r_1}, 1 \oplus {r_2}}^{u,v}$. 
This gives us
\begin{eqnarray*}
D(\xi_0^A,\xi_1^A) & \le & \frac{1}{2S}\sum_{u,v}2 \Gamma_{u,v}\left(\lambda^2 - (1-\lambda)^2\right) \\ 
& \le & 2\lambda - 1 
\end{eqnarray*}
Since, $\xi_c = \lambda^2 \ketbra{cc}{cc} + \lambda(1 -\lambda) (\ketbra{01}{01} + \ketbra{10}{10}) + (1-\lambda)^2\ketbra{\overline{c} \ \overline{c}}{\overline{c} \; \overline{c}}$, we have $
D(\xi_0,\xi_1) = (\lambda^2 - (1-\lambda)^2) = 2\lambda - 1$, which allows us to conclude.
\end{proof}

We can now prove our main Claim
\begin{proposition}
$P^*_B \le \lambda$
\end{proposition}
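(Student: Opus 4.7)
The plan is to reduce Bob's cheating probability to the problem of distinguishing the two states $\xi_0^A$ and $\xi_1^A$ that he holds just before he has to announce $c'$, and then invoke the previous proposition together with the Helstrom bound.

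First I would argue that without loss of generality a cheating Bob's strategy factors as follows: he applies some quantum operation $A$ to the two registers Alice sends in step 1, uses the flag in $\spa{O}$ to decide whether to claim success (postselecting on outcome $0$ in $\spa{O}$ and otherwise demanding a restart), receives $r_1,r_2$ from Alice, and then must output a bit $c'$. The outcome $x = c\oplus c'$ so Bob wins by a given amount iff he can guess $c$ from the state he holds at that point. Since $c$ is chosen uniformly by Alice, the state of Bob conditioned on $c$ being $0$ or $1$ is precisely $\xi_0^A$ or $\xi_1^A$ as defined above the preceding proposition (note that the definition already folds in the postselection and the classical registers holding $r_1,r_2$).

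Next I would apply the Helstrom bound (the last proposition in the preliminaries) to obtain
\[
\Pr[\textrm{Bob guesses } c \mid A] \;\le\; \frac{1}{2} + \frac{D(\xi_0^A,\xi_1^A)}{2}.
\]
By the previous proposition $D(\xi_0^A,\xi_1^A) \le D(\xi_0,\xi_1)$, and from the explicit formula $\xi_c = \rho_c^{\otimes 2}$ computed inside that proof we have $D(\xi_0,\xi_1) = 2\lambda - 1$. Combining,
\[
\Pr[\textrm{Bob guesses } c \mid A] \;\le\; \frac{1}{2} + \frac{2\lambda - 1}{2} \;=\; \lambda.
\]
Taking the maximum over $A$ and over Bob's desired outcome yields $P^*_B \le \lambda$.

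The main subtlety I would want to be careful about is loss-tolerance: Bob may restart the protocol many times, so one needs the bound on distinguishability to survive the postselection on success. That is exactly what the previous proposition delivered, because $\xi_c^A$ was defined after dividing by the normalization $S$; so the Helstrom step is being applied to the correct postselected states, and the argument is immediate once this point is acknowledged. The remainder is simply combining the Helstrom bound with the distance calculation.
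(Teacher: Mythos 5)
Your proposal is correct and follows essentially the same route as the paper's own proof: reduce Bob's cheating to distinguishing $\xi_0^A$ from $\xi_1^A$, apply the Helstrom bound, and invoke the preceding proposition's inequality $D(\xi_0^A,\xi_1^A) \le D(\xi_0,\xi_1) = 2\lambda - 1$. Your explicit remarks on the postselection and normalization $S$ make the loss-tolerance point more transparent than the paper's terse version, but the argument is the same.
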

\begin{proof}
Suppose w.log that Bob wants outcome $x = 0$. He wants to pick $c' = c$. Before picking $c'$, he has the state $\xi_c^A$. We have
\begin{eqnarray*}
P^*_B & = & \Pr[\textrm{ Bob guesses } c] \\
& = & \frac{1}{2} + \frac{D(\xi_0^A,\xi_1^A)}{2} \\
& \le & \lambda
\end{eqnarray*}
\end{proof}

\begin{theorem}
There is a loss-tolerant quantum coin flipping protocol with bias $\eps \approx 0.359$
\end{theorem}
\begin{proof}
We just need to find $\lambda$ that minimizes $\max(P^*_A,P^*_B)$. The maximum is achieved for $\lambda \approx 0.859$ which gives $P^*_A = P^*_B \approx 0.859$ which gives a bias $\eps \approx 0.359$.
\end{proof}

\section{Further discussion}
\paragraph{Optimality of the bias} The bias that we show here is actually not optimal for the protocol. The reason is the following: in the analysis of cheating Alice (Section~\ref{CheatingAliceSec}), we consider the cheating probability for Alice depending on whether Bob checks the first bit, the second bit or both bits. For each of these cases, we upper bound Alice's cheating probability. But it appears that the cheating probabilities for each of these cases is different and that Alice cannot cheat optimally for all these cases at the same time. This slightly decreases Alice's cheating probability. We can numerically calculate calculate in this case that for $\lambda \approx 0.858$, we have $P^*_A = P^*_B \approx 0.858$. This gives a bias of $\eps \approx 0.858$ which is a slight improvement over what is shown.

\paragraph{Multiple repetition} Our protocol consists of a two-fold repetition of Berlin $\etal$'s protocol. What happens if we consider a $k$-fold repetition? Even if it is difficult to calculate the exact cheating probabilities of Alice and Bob in the case of multiple repetitions, these probabilities can be easily upper and lower bounded. We use the following bounds. Let $P^*_A(k,\lambda)$ the cheating probability for Alice (resp. Bob) with a $k$-fold repetition of Berlin $\etal$'s protocol with parameter $\lambda$. Let $P(k) = \min_{\lambda}(\max\{P^*_A(k,\lambda),P^*_B(k,\lambda)$. $P(k)$ corresponds to the best cheating probability when consider a $k$-fold repetition of the protocol. We need to lower bound $P^*_A(k,\lambda)$. We have
\[
P^*_A(k,\lambda) \le f(k,\lambda) = \frac{1}{2} + \frac{1}{2}\left(\frac{1}{2} + \sqrt{\lambda(1-\lambda)}\right)^k
\]

This is a generalization of the upper bound we use to show that $\eps \approx 0.359$. Intuitively, this corresponds to the case where Alice knows if Bob measured in the correct basis' or not. When we consider Alice's cheating strategies where she uses separate (non entangled) strategies for each of the $k$ repetitions, we have the following lower bound.
\[
P^*_A(k,\lambda) \ge g(k,\lambda) = (\frac{3}{4} + \frac{\sqrt{\lambda(1-\lambda)}}{2})^k \]
On the other hand, it possible to calculate exactly Bob's cheating probability since
\[
P^*_B(k,\lambda) = 1/2 + D(\rho_0^{\otimes k},\rho_1^{\otimes k})/2
\]
Using these bounds, we get the following diagram for cheating probabilities of Alice and Bob which shows that the optimal value is achieved using a $2$-fold repetition of the protocol. The $x$-axis corresponds to the number of repetition $k$. The $y$-axis corresponds to the minimal cheating probability $P(k)$ when using lower/upper bounds for $P^*_A$. \\ 
\begin{center}
\pict{\includegraphics[width = 7cm]{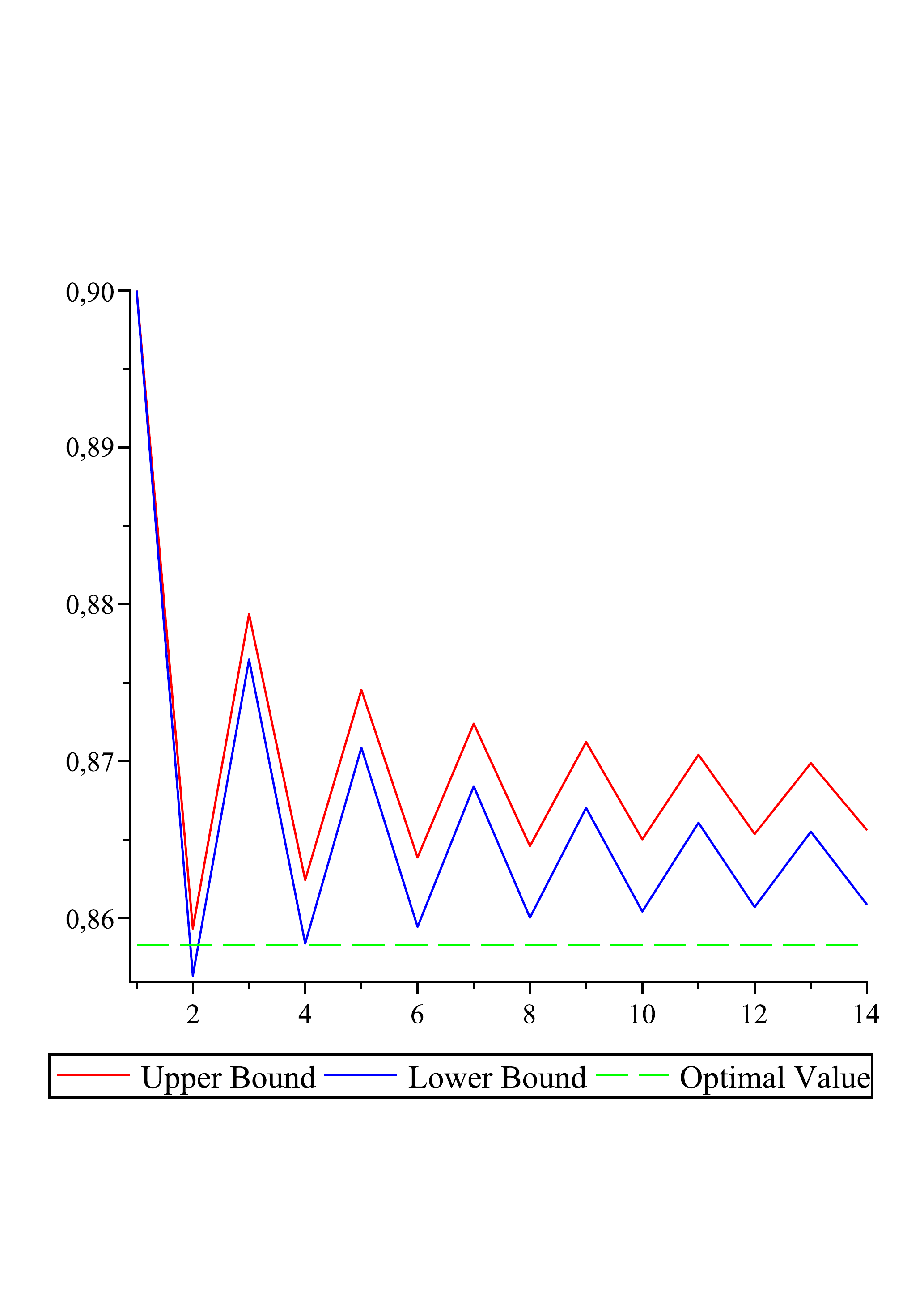}} \end{center}

\section{Conclusion and open questions}
In this work, we presented a loss-tolerant quantum coin flipping protocol with bias $\eps \approx 0.359$. To do this, we presented a general method to disallow cheating Bob to take advantage of the fact that he can reset the protocol when one of his measurement fails.
It would be interesting to see whether such techniques can be used for other protocols which have information theoretic security or not. Moreover, what is the best bias that can be achieved for such loss-tolerant protocols and can such protocols also be noise-tolerant?

\bibliography{paper}
\bibliographystyle{alpha}

\end{document}